\newcommand{\eat}[1]{}
\newcommand{\at}[1]{\protect\ensuremath{\mathsf{#1}}\xspace}
\newcommand{\stab}{\vspace{1.2ex}\noindent}
\newcommand{\ra}{\rightarrow}
\newcommand{\la}{\leftarrow}
\newcommand{\bi}{\begin{itemize}}
\newcommand{\ei}{\end{itemize}}
\newcommand{\mat}[2]{{\begin{tabbing}\hspace{#1}\=\+\kill #2\end{tabbing}}}
\newcommand{\be}{\begin{enumerate}}
\newcommand{\ee}{\end{enumerate}}
\newcommand{\beqn}{\begin{eqnarray*}}
\newcommand{\eeqn}{\end{eqnarray*}}
\newcommand{\stitle}[1]{\vspace{1ex}\noindent{\bf #1}}
\newcommand{\etitle}[1]{\vspace{0.8ex}\noindent{\underline{\em #1}}}
\newcommand{\ie}{\emph{i.e.,}\xspace}
\newcommand{\eg}{\emph{e.g.,}\xspace}
\newcommand{\True}{\mbox{\em true}}
\newcommand{\kw}[1]{{\ensuremath {\mathsf{#1}}}\xspace}
\newcounter{ccc}
\newcommand{\eop}{\hspace*{\fill}\mbox{$\Box$}\vspace{1ex}}     
\newcounter{example}
\renewcommand{\theexample}{\arabic{example}}
\newenvironment{example}{
        \vspace{1ex}
        \refstepcounter{example}
        {\noindent\bf Example \theexample:}}{
        \eop}
\renewcommand{\ni}{\noindent}
\newcommand{\nthesection}{\arabic{section}}
\newcounter{theorem}
\renewcommand{\thetheorem}{\arabic{theorem}}
\newcounter{prop}[section]
\newcounter{lemma}[section]
\renewcommand{\thelemma}{\nthesection.\arabic{theorem}}
\newcounter{cor}
\renewcommand{\thecor}{\arabic{theorem}}
\newenvironment{theorem}{\begin{em}
        \refstepcounter{theorem}
        {\vspace{1ex} \noindent\bf  Theorem  \thetheorem:}}{
        \end{em}\eop} 
\newenvironment{lemma}{\begin{em}
        \refstepcounter{theorem}
        {\vspace{1ex}\noindent\bf Lemma \thelemma:}}{
        \end{em}\eop} 
\newcounter{definition}[section]
\renewcommand{\thedefinition}{\nthesection.\arabic{definition}}
\newcounter{alg}[section]
\renewcommand{\thealg}{\nthesection.\arabic{alg}}
\newcounter{arule}
\renewcommand{\thearule}{\arabic{arule}}
\newcounter{claim}
\renewcommand{\theclaim}{\arabic{claim}}
\newcommand{\sys}{{\sf gLava}\xspace}
\newcommand{\cm}{{\sf CountMin}\xspace}
\newcommand{\gs}{{\sf gSketch}\xspace}
\newcommand{\cnt}{\kw{count}\xspace}
\newcommand{\weight}{\at{weight}\xspace}
    \newcommand\figcaption{\def\@captype{figure}\caption}
    \newcommand\tabcaption{\def\@captype{table}\caption}
\tikzset{
  normal border/.style={orange!30!black!10, decorate, 
     decoration={random steps, segment length=2.5cm, amplitude=.7mm}},
  torn border/.style={orange!30!black!5, decorate, 
     decoration={random steps, segment length=.5cm, amplitude=1.7mm}}}
\patchcmd{\maketitle}{\@copyrightspace}{}{}{}
\begin{document}

\title{On Summarizing Graph Streams}

\numberofauthors{1}

\author{ 
	Nan Tang \hspace{2ex} 
	Qing Chen\thanks{Qing is a QCRI intern, from Fudan University, China.} \hspace{2ex} 
	Prasenjit Mitra
	\\
	\affaddr{Qatar Computing Research Institute}
	\\
	\email{\{ntang, qchen, pmitra\}@qf.org.qa}
}

\maketitle
\date{}
\pagestyle{plain}

\begin{abstract}
Graph streams, which refer to the graph with edges being updated sequentially in a form of a stream, have wide applications such as cyber security, social networks and transportation networks. 
This paper studies the problem of {\em summarizing graph streams}. Specifically, given a graph stream $G$, directed or undirected, the objective is to summarize $G$ as $S_G$ with much smaller ({\em sublinear}) space, {\em linear} construction time and {\em constant} maintenance cost for each edge update, such that $S_G$ allows many queries over $G$ to be approximately conducted efficiently.
Due to the sheer volume and highly dynamic nature of graph streams, summarizing them remains a notoriously hard, if not impossible, problem.
The widely used practice of summarizing data streams is to treat each element independently by \eg hash- or sampling-based method, without keeping track of the connections between elements in a data stream, which gives these summaries limited power in supporting complicated queries over graph streams.
This paper discusses a fundamentally different philosophy for summarizing graph streams.
We present \sys, a probabilistic graph model that, instead of treating an edge (a stream element) as the operating unit, uses the finer grained node in an element.
This will naturally form a new graph sketch where edges capture the connections inside elements, and nodes maintain relationships across elements.
We discuss a wide range of supported graph queries and establish theoretical error bounds for basic queries.
%
\end{abstract}

\section{Introduction} 
\label{sec-intro}

Massive graphs arise in many applications \eg network traffic data, social networks, citation graphs, and transportation networks. These graphs  are highly dynamic: Network traffic data averages to about $10^9$ packets per hour per router for large ISPs~\cite{DBLP:journals/pvldb/GuhaM12}; Twitter sees 100 million users login daily, with around 500 millions tweets per day~\cite{twitterstat}.

\begin{example}
\label{exam:graph} 
We consider a graph stream\footnote{Without loss of generality, we use a directed graph for illustration. Our method can also work on undirected graphs.} as a sequence of elements $(x,y; t)$, which indicates that the edge $(x,y)$ is encountered at time $t$. 
A graph stream $\langle(a, b; t_1), (a, c; t_2), \cdots (b, a; t_{14})\rangle$ is depicted in Fig.~\ref{fig:graph}, where all timestamps are omitted for simplicity. 
Each edge is associated with a weight, which is $1$ by default.
\end{example}

Queries over big graph streams have many important applications, \eg monitoring cyber security attack and traffic networks. Unfortunately,
in a recent survey for businesses in analyzing big data streams~\cite{Vitria}, although 41\% of respondents stated that the ability to analyze and act on streaming data in minutes was critical, 67\% also admitted that they do not have the infrastructure to support that goal.
The situation is more complicated and challenging when handling graph streams.
It poses unique space and time constraints on storing, summarizing, maintaining and querying graph streams, due to its sheer volume, high velocity, and the complication of various queries. 

In fact, for the applications of data streams, fast and approximated answers are often preferred than exact answers.
Hence, sketch synopses have been widely studied for estimations over data streams.
Although they allow false positives, the space savings normally prevail over this drawback, when the probability of an error is sufficiently low.
There exist many sketches:~AMS~\cite{DBLP:journals/jcss/AlonMS99}, Lossy Counting~\cite{DBLP:conf/vldb/MankuM02}, \cm~\cite{DBLP:journals/jal/CormodeM05}, Bottom-$k$~\cite{DBLP:journals/pvldb/CohenK08} and \gs~\cite{DBLP:journals/pvldb/ZhaoAW11}. They have also been used in commercial tools, \eg~\cm is in Cisco OpenSOC\footnote{http://opensoc.github.io/} cyber security analytics framework.

\begin{figure}[t]
\begin{minipage}{\columnwidth}
\centerline{
\xymatrixcolsep{0.6in}
\xymatrix{
& b \ar@/^/[dl]^1 \ar[dd]^1 \ar[r]^1 \ar[ddr]^(.5)1 & d \ar[dr]^1  & \\
a \ar@/^/[ur]^1 \ar[dr]_1 & & e \ar[u]_(0.2)1 \ar[d]^1 \ar[ul]_(.3)1 & g \ar[llu]_(.4)1 \\
& c \ar[ur]^1 \ar[r]_1 & f \ar[llu]_(.7)1 & 
} 
} 
\caption{A sample graph stream}
\label{fig:graph}
\end{minipage}
\begin{minipage}{\columnwidth}
\vspace{4ex}
\centering
\begin{tabular}{|c|c|c|c|c|}
	\cline{2-5}
    \multicolumn{1}{c|}{$h(\cdot)$} & 5 & 5 & 3 & 1 \\ 
	\cline{2-5}
     \multicolumn{1}{c}{\color{white} {\Large I} } & \multicolumn{1}{c}{$ab, ac, ed, eb, ef$} & \multicolumn{1}{c}{$bc, bd, ba, bf, fa$} & \multicolumn{1}{c}{$ce, cf, gb$}  & \multicolumn{1}{c}{$dg$}\\
\end{tabular}
\caption{A hash-based sketch}
\label{fig:cm}
\end{minipage}
\begin{minipage}{\columnwidth}
\vspace{4ex}
\centerline{
\xymatrix{
& II (bf) \ar@/^/[dl]_2 \ar@/^1pc/[dd]^(.7){1} \ar@/_/[dr]^1 \ar@(ur,ul)[]^1 & \\
I (ae) \ar@/^/[ur]^3 \ar@/_/[dr]^1 \ar@/^/[rr]^1  & & III(cg) \ar@/^/[ll]_1 \ar@/_/[ul]_2 \\
& IV(d) \ar@/_/[ur]^1 & 
} 
} 
\caption{An example of our proposed sketch}
\label{fig:sys}
\end{minipage}
\end{figure}

\eat{
\begin{figure}[t]
\begin{minipage}{\columnwidth}
\centerline{
\xymatrix{
& 2 (b) \ar[dd]^{t_3} \ar[r]^{t_4} & 4(d) \ar[r]^{t_{10}} & 7(g) \ar[d]^{t_{11}} \\
1 (a) \ar[ur]^{t_1} \ar[dr]^{t_2} & & 5(e) \ar[u]^{t_7} \ar[d]^{t_8} \ar[r]^{t_9} & 8(b) \ar[ld]^{t_{12}} \ar[d]^{t_{14}}  \\
& 3(c) \ar[ur]^{t_5} \ar[r]^{t_6} & 6(f) \ar[r]^{t_{13}}  & 9(a)
} 
} 
\caption{An example graph stream}
\label{fig:graph}
\end{minipage}
\begin{minipage}{\columnwidth}
\vspace{4ex}
\centering
\begin{tabular}{|c|c|c|c|c|}
	\cline{2-5}
    \multicolumn{1}{c|}{$h(\cdot)$} & 5 & 5 & 3 & 1 \\ 
	\cline{2-5}
     \multicolumn{1}{c}{\color{white} {\Large I} } & \multicolumn{1}{c}{$ab, ac, ed, eb, ef$} & \multicolumn{1}{c}{$bc, bd, ba, bf, fa$} & \multicolumn{1}{c}{$ce, cf, gb$}  & \multicolumn{1}{c}{$dg$}\\
\end{tabular}
\caption{A hash-based sketch}
\label{fig:cm}
\end{minipage}
\begin{minipage}{\columnwidth}
\vspace{4ex}
\centerline{
\xymatrix{
& II (bf) \ar@/^/[dl]_2 \ar@/^1pc/[dd]^(.7){1} \ar@/_/[dr]^1 \ar@(ur,ul)[]^1 & \\
I (ae) \ar@/^/[ur]^3 \ar@/_/[dr]^1 \ar@/^/[rr]^1  & & III(cg) \ar@/^/[ll]_1 \ar@/_/[ul]_2 \\
& IV(d) \ar@/_/[ur]^1 & 
} 
} 
\caption{A sample of \sys sketch}
\label{fig:sys}
\end{minipage}
\end{figure}
} 

\begin{example}
\label{exam:gs}
Consider the graph stream in Example~\ref{exam:graph} and Fig.~\ref{fig:graph}.
\cm treats each element in the stream independently, and maps them to $w$ hash buckets, by using the pairs of node labels as hash keys. 
Assume there are $w=4$ hash buckets. 
The set of values to be hashed is $\{ab, ac, bc, \cdots\}$, where the value $ab$ is to concatenate two node labels for the element $(a, b;t_1)$.
A hash function is used to put these values in 4 buckets, \ie $h(\cdot)\ra [1, 4]$, as shown in Fig.~\ref{fig:cm}.
The first bucket aggregates the weights of values starting with $a$ and $e$, and similar for other buckets.
Note that, for simplicity, we use only one hash function for illustration. In fact, \cm uses $d$ pairwise independent hash functions to alleviate the problem of hash key collisions.

\cm, and its variant \gs, can be used for estimating certain types of graph queries. 
(a) {\em Edge query}. It can estimate the weight of particular edges: $5$ for $ab$, and $1$ for $dg$.
(b) {\em Aggregate subgraph query.} It can answer queries like: {\em What is the aggregated weight for a graph with two edges $(a, c)$ and $(c, e)$}? It will give $5+3=8$ as the estimated weight for the graph. 
\end{example}

Example~\ref{exam:gs} shows the applicability of \cm on some query estimation over graph streams.
However, their (and other sketch synopses') main weakness is that they are element (or edge) independent. That is, they fall short of maintaining the connections between streaming elements, thus fail in estimating many important queries, such as node monitoring, path queries (\eg reachability) and more complicated graph analytics (see Section~\ref{subsec-queries} for a detailed discussion).

\stitle{Challenges.}
Designing a graph sketch to support a wide range of applications requires to satisfy the following constraints.
(1) {\em Space constraint:} {\em sublinear} upper bound is needed.
(2) {\em Time constraint:} {\em linear} construction time is required. Noticeably, this is stronger than the constraint with only a constant passes over the stream.
(3) {\em Maintenance constraint:} to maintain it for one element insertion/deletion should be in {\em constant} time.
(4) {\em Element connectivity:} the {\em connectivity} between elements should be maintained.

To this end, we present \sys, a novel generalized graph sketch to meet the above required constraints. Instead of treating elements (edges) in a graph stream independently as those prior art, the key idea of \sys is to compress a graph stream based on a finer grained item, the node in a stream element. 

\begin{example}
\label{exam:sys}
Again, consider the graph stream in Fig.~\ref{fig:graph}. Our proposed sketch is shown in Fig.~\ref{fig:sys}.
For each edge $(x, y; t)$, \sys uses a hash function to map each node label to 4 node buckets \ie~$h'(\cdot)\ra[1,4]$. 
Node $I$ is the summary of two node labels $a$ and $e$, assuming $h'(a) = 1$ and $h'(e) =1$. The other compressed nodes are computed similarly.
The edge weight denotes the aggregated weights from stream elements, \eg the number $3$ from node $I$ to $II$ means that there are three elements as $(x, y; t)$ where the label of $x$ (resp. $y$) is $a$ or $e$ (resp. $b$ or $f$).
\end{example}

\etitle{Remark.}
(1) It is readily to see that estimation for edge frequencies, as what \cm supports, can be easily achieved by \sys. Better still, the idea of using multiple hash functions to alleviate hash collisions can be easily applied to \sys.
(2) \sys is represented as a graph, which captures not only the connections inside elements, but also the links across elements.
These make it an ideal sketch to support a much wider range of applications, compared with those prior art (see Section~\ref{subsec-queries} for a discussion).

\stitle{Contributions.}
This paper presents a novel graph sketch for supporting a wide range of graph stream applications.

\stab(1)
We introduce \sys, a novel graph sketch (Section~\ref{subsec-gmodel}).
As shown in Fig.~\ref{fig:sys}, the proposed sketch naturally preserves the graphical connections of the original graph stream, which makes it a better fit than traditional data stream sketches to support analytics over graph streams. 
We further categorize its supported graph queries (Section~\ref{subsec-queries}).

\stab(2)
We describe algorithms to process various graph analytics on top of \sys (Section~\ref{sec-algs}).
The general purpose is, instead of proposing new algorithms, to show that \sys can easily be used to support many graph queries, and {\em off-the-shelf} graph algorithms can be seamlessly integrated.

\stab(3) We perform theoretical analysis to establish error bounds for basic queries, specifically, the edge frequency estimation and point query estimation (Section~\ref{sec-theory}).

\stab(4)
We describe implementation details (Section~\ref{sec-implementation}).
This is important to ensure that the new sketch can be constructed and maintained under the hard time/space constraints to support streaming applications.
Moreover, we propose to use non-square matrices, using the same space, to improve the accuracy of estimation.




\vspace{1ex}

In this history of graph problems over streams, unfortunately, most results showed that a large amount of space is required for {\em complicated} graph problems~\cite{DBLP:books/crc/aggarwal2014, DBLP:journals/sigmod/McGregor14}.
The present study fills a gap in the literature by analyzing various graph problems in a small amount of space.
We thus contend that \sys will shed new light on graph stream management.

\stitle{Organization.}
Section~\ref{sec-related} discusses related work.
Section~\ref{sec-gapollo} defines the new sketch and queries to be solved.
Section~\ref{sec-algs} describes algorithms using \sys.
Section~\ref{sec-theory} makes theoretical analysis.
Section~\ref{sec-implementation} describes implementation details.
Finally, Section~\ref{sec-conclusion} concludes the paper with a summary of our findings.

\section{Related Work} 
\label{sec-related}

We categorize related work as follows.

\etitle{Sketch synopses.}
Given a data stream, the aim of sketch synopses is to apply (linear) projections of the data into lower dimensional spaces that preserve the salient features of the data. A considerable amount of literature has been published on general data streams such as AMS~\cite{DBLP:journals/jcss/AlonMS99}, lossy counting~\cite{DBLP:conf/vldb/MankuM02}, \cm~\cite{DBLP:journals/jal/CormodeM05} and bottom-$k$~\cite{DBLP:journals/pvldb/CohenK08}. The work \gs~\cite{DBLP:journals/pvldb/ZhaoAW11} improves \cm for graph streams, by assuming that data samples or query samples are given.
Bloom filters have been widely used in a variety of network problems (see~\cite{DBLP:journals/im/BroderM03} for a survey).
There are also sketches that maintain counters only for nodes, \eg using a heap for maintaining the nodes the largest degrees for heavy hitters~\cite{DBLP:conf/pods/CormodeM05}. 

As remarked earlier, \sys is more general, since it maintains connections inside and across elements, {\em without} assuming  any sample data or query is given. None of existing sketches maintains both node and edge information.

\etitle{Graph summaries.}
Summarizing graphs has been widely studied. The most prolific area is in web graph compression. The papers~\cite{DBLP:conf/dcc/AdlerM01,DBLP:conf/dcc/SuelY01} encode Web pages with similar adjacency lists using reference encoding, so as to reduce the number of bits needed to encode a link. The work~\cite{DBLP:conf/icde/RaghavanG03} groups Web pages based on a combination of their URL patterns and $k$-means clustering.
The paper~\cite{DBLP:conf/sigmod/FanLWW12} compresses graphs based on specific types of queries.
There are also many clustering based methods from data mining community (see \eg~\cite{DBLP:books/ph/JainD88}), with the basic idea to group {\em similar} nodes together.

These data structures are designed for less dynamic graphs, which are not suitable for graph stream scenarios.

\etitle{Graph pattern matching over streams.}
There have been several work on matching graph patterns over graph streams, based on either the semantics of subgraph isomorphism~\cite{DBLP:conf/icde/WangC09, DBLP:conf/edbt/ChoudhuryHCAF15, DBLP:conf/icde/GaoZZY14a} or graph simulation~\cite{DBLP:journals/pvldb/SongGCW14}. 
The work~\cite{DBLP:conf/icde/WangC09}  assumes that queries are given, and builds node-neighbor tree to filter false candidate results. 
The paper~\cite{DBLP:conf/icde/GaoZZY14a} leverages a distributed graph processing framework, Giraph, to approximately evaluate graph quereis.
The work~\cite{DBLP:conf/edbt/ChoudhuryHCAF15} uses the subgraph distributional statistics collected from the graph streams to optimize a graph query evaluation.
The paper~\cite{DBLP:journals/pvldb/SongGCW14} uses filtering methods to find data that potentially matches for a specific type of queries, namely {\em degree-preserving dual simulation with timing constraints}. 

Firstly, all the above algorithms are designed for a certain type of graph queries.
Secondly, most of them assume the presence of queries, so they can build indices to accelerate.
In contrast, \sys aims at summarizing graph streams in a generalized way, so as to support various types of queries, {\em without} any assumption of queries.

\etitle{Graph stream algorithms.}
There has also been work on algorithms over graph streams (see~\cite{DBLP:journals/sigmod/McGregor14} for a survey). This includes the problems of connectivity~\cite{DBLP:journals/tcs/FeigenbaumKMSZ05}, trees~\cite{targan1983}, spanners~\cite{DBLP:journals/talg/Elkin11}, sparsification~\cite{DBLP:journals/mst/KelnerL13}, counting subgraphs \eg triangles~\cite{DBLP:conf/icalp/BravermanOV13, DBLP:conf/kdd/TsourakakisKMF09}. However, they mainly focus on theoretical study for best approximation bound, mostly on $O(n$ polylog $n)$ space, with one to multiple passes over the data stream.

In fact, \sys is a friend, instead a competitor, of them. As will be seen later (Section~\ref{sec-algs}), \sys can treat existing algorithms as {\em black-boxes} to help solve existing problems.

\etitle{Distributed graph systems.}
Many distributed graph computing systems have been proposed to conduct data processing and data analytics in massive graphs, such as Pregel~\cite{DBLP:conf/sigmod/MalewiczABDHLC10}, Giraph\footnote{http://giraph.apache.org}, GraphLab~\cite{DBLP:journals/pvldb/LowGKBGH12}, Power-Graph~\cite{DBLP:conf/osdi/GonzalezLGBG12} and GraphX~\cite{DBLP:conf/osdi/GonzalezXDCFS14}.

They have been proved to be efficient on static graphs, but are not ready for doing analytics over big graph streams with real-time response. In fact, they are complementary to and can be used for \sys in distributed settings.

\section{gLava and Supported Queries} 
\label{sec-gapollo}

We first define graph streams (Section~\ref{subsec-gs}) and state the studied problem (Section~\ref{subsec-ps}).
We then introduce our proposed graph sketch model (Section~\ref{subsec-gmodel}).
Finally, we discuss the queries supported by our new sketch (Section~\ref{subsec-queries}).

\subsection{Graph Streams}
\label{subsec-gs}

A {\em graph stream} is a sequence of elements $e = (x, y; t)$ where $x, y$ are node identifiers (labels) and edge $(x, y)$ is encountered at time-stamp $t$. Such a stream,
\[
G=\langle e_1, e_2, \cdots, e_m \rangle
\]

\ni naturally defines a graph $G=(V, E)$ where $V$ is a set of nodes and $E=\{e_1, \cdots, e_m\}$.
We write $\omega(e_i)$ the weight for the edge $e_i$, and  
$\omega(x, y)$ the aggregated edge weight from node $x$ to node $y$.
We call $m$ the {\em size} of the graph stream, denoted by $|G|$.

Intuitively, the node label, being treated as an identifier, uniquely identifies a node, which could be \eg IP addresses in network traffic data or user IDs in social networks. 
Note that, in the graph terminology, a graph stream is a {\em multigraph}, where each edge can occur many times, \eg one IP address can send multiple packets to another IP address. We are interested in properties of the underlying graph.
The causes a main challenge with graph streams where one normally does not have enough space to record the edges that have been seen so far. 
Summarizing such graph streams in one pass is important to many applications.
%
%
Moreover, we do not explicitly differentiate whether $(x, y)$ is an ordered pair or not. In other words, our approach applies naturally to both directed and undirected graphs. 
 
For instance, in network traffic data, a stream element arrives at the form: $(192.168.29.1, 192.168.29.133, 62, 105.12)$\footnote{We omit port numbers and protocols for simplicity.}, where node labels $192.168.29.1$ and $192.168.29.133$ are IP addresses, $62$ is the number of bytes sent {\em from} $192.168.29.1$ {\em to} $192.168.29.133$ in this captured packet (\ie the weight of the directed edge), and $105.12$ is the time in seconds that this edge arrived when the server started to capture data. Please see Fig.~\ref{fig:graph} for a sample graph stream.
%
%

\subsection{Problem statement}
\label{subsec-ps}

The problem of {\em summarizing graph streams} is, given a graph stream $G$, to design another data structure $S_G$ from $G$, such that:

\be
	\item $|S_G| \ll |G|$: the size of $S_G$ is far less than $G$, preferably in sublinear space.
	\item The time to construct $S_G$ from $G$ is in linear time.
	\item The update cost of $S_G$ for each edge insertion/deletion is in constant time.
\ee

Intuitively, a graph stream summary has to be built and maintained in real time, so as to deal with big volume and high velocity graph stream scenarios.
In fact, the \cm~\cite{DBLP:conf/vldb/MankuM02} and its variant \gs~\cite{DBLP:journals/pvldb/ZhaoAW11} satisfy the above conditions (see Example~\ref{exam:gs} for more details). Unfortunately, as discussed earlier, \cm and \gs can support only limited types of graph queries.

\subsection{The gLava Model}
\label{subsec-gmodel}

\stitle{The graph sketch.}
A  {\em graph sketch} is a graph $S_G({\cal V}, {\cal E})$, where ${\cal V}$ denotes the set of vertices and ${\cal E}$ its edges. For vertex $v\in {\cal V}$, we simply treat its label as its node identifier (the same as the graph stream model).
Each edge $e$ is associated with a weight, denoted as $\omega(e)$. 

 In generating the above graph sketch $S_G$ from a graph $G$, we first set the number of nodes in the sketch, \ie~let $|{\cal V}| = w$. For an edge $(x, y;t)$ in $G$, we use a hash function $h$ to map the label of each node to a value in $[1, w]$, and the aggregated edge weight is calculated correspondingly.

Please refer to Fig.~\ref{fig:sys} as an example, where we set $w=4$.

\etitle{Discussion about edge weight.}
The edge weight for an edge $e$ in the graph sketch is computed by an {\em aggregation} function of all edge weights that are mapped to $e$.
Such an aggregation function could be $\at{min}(\cdot)$, $\at{max}(\cdot)$, $\at{count}(\cdot)$, $\at{average}(\cdot)$, $\at{sum}(\cdot)$ or other functions.
In this paper, we use $\at{sum}(\cdot)$ by default to explain our method. 
The other aggregation functions can be similarly applied.
In practice, which aggregation function to use is determined by applications.
For a more general setting that requires to maintain multiple aggregated functions in a graph sketch, 
we may extend our model to have multiple edge weights \eg~$\omega_1(e), \cdots ,\omega_n(e)$, 
with each $\omega_i(e)$ corresponds to an aggregation function.

\etitle{Remark.}
One may observe that the graph sketch model is basically the same as the graph stream model, with the main difference that the time-stamps are not maintained. This makes it very useful and applicable in many scenarios when querying a graph stream for a given time window. In other words, for a graph analytics method $M$ that needs to run on a streaming graph $G$, denoted as $M(G)$, one can run it on its sketch $S_G$, \ie~$M(S_G)$, to get an estimated result, without modifying the method $M$.

\begin{example}
\label{exam:gsedge}
Consider the graph stream in Fig~\ref{fig:graph} and its sketch in Fig.~\ref{fig:sys}. 
Assume that query $Q_1$ is to estimate the aggregated weight of edges from $b$ to $c$.
In Fig.~\ref{fig:sys}, one can map $b$ to node $II$, $c$ to node $III$, and get the estimated weight $1$ from edge $(II, III)$, which is precise.
Now consider  $Q_2$, which is to compute the aggregated weight from $g$ to $b$.
One can locate the edge $(III, II)$, and the estimated result is $2$, which is not accurate since the real weight of $(g, b)$ in Fig~\ref{fig:graph} is $1$.
\end{example}
 
The above result is expected, since given the compression, no hash function can ensure that the estimation on the sketch can be done precisely. Along the same line of \cm~\cite{DBLP:journals/jal/CormodeM05}, we use multiple independent hash functions to reduce the probability of hash collisions.

\begin{figure}[t]
\hspace*{2ex}
\begin{minipage}{0.4\columnwidth}
\centerline{
\xymatrixcolsep{0.1in}
\xymatrix{
& II (bc) \ar@/^/[dl]_3 \ar@/^1pc/[dd]^(.7){1} \ar@/_/[dr]^1 \ar@(ur,ul)[]^1 & \\
I (af) \ar@(u,l)[]^1 \ar@/^/[ur]^2  & & III(dg) \ar@/_/[ul]_1 \ar@(u,r)[]^1 \\
& IV(e) \ar@/_/[ur]^1 \ar[ul]^1 \ar@/^/[uu]^(.3){1} & \\
} 
} 
\centerline{
\xymatrixrowsep{0.2in}
\xymatrix{
& \txt{(a) Sketch $S_1$} &
} 
} 
\end{minipage}
\hspace*{6ex}
\begin{minipage}{0.4\columnwidth}
\centerline{
\xymatrixcolsep{0.1in}
\xymatrix{
& ii (cd) \ar@/^1pc/[dd]^(.4){2} \ar[dr]^1 & \\
i (ab) \ar[ur]^3 \ar@(ru,lu)[]^2  \ar@/_/[dr]_1 & & iii(g) \ar@/^/[ll]_1 \\
& iv(ef) \ar@/_/[ul]^2 \ar@(dr,dl)[]_1 \ar@/^/[uu]^(.6){1} & 
} 
} 
\centerline{
\xymatrixrowsep{0.2in}
\xymatrix{
& \txt{(b) Sketch $S_2$} &
} 
} 
\end{minipage}
\caption{A \sys sketch with 2 hash functions}
\label{fig:2hash}
\end{figure}
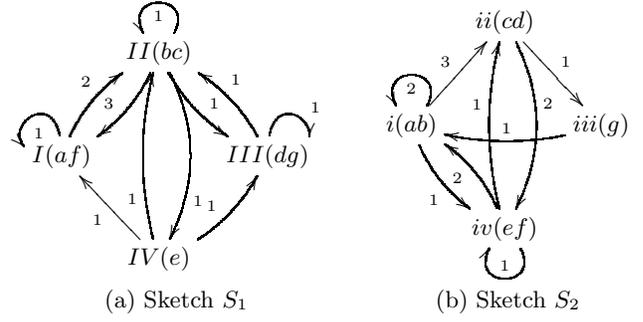

\stitle{The \sys model.}
A \sys sketch is a set of graph sketches $\{S_1({\cal V}_1, {\cal E}_1), \cdots, S_d({\cal V}_d, {\cal E}_d)\}$. Here, we use $d$ hash functions $h_1, \cdots, h_d$, where $h_i$ ($i\in[1, d]$) is used to generate $S_i$. Also, $h_1, \cdots, h_d$ are chosen uniformly at random from a pairwise-independent family (see Section~\ref{subsec-pairwise} for more details).
 
\begin{example}
\label{exam:2hash}
Figure~\ref{fig:2hash} shows another two sketches for Fig.~\ref{fig:graph}. 
Again, consider the query $Q_2$ in Example~\ref{exam:gsedge}.
Using $S_1$ in Fig.~\ref{fig:2hash} (a), one can locate edge $(III, II)$ for $(g, b)$, which gives $1$.
Similarly, $S_2$ in Fig.~\ref{fig:2hash} (b) will also give $1$ from edge $(iii, i)$, where $g$ (resp. $b$) maps to $iii$ (resp. $i$). The minimum of the above two outputs is $1$, which is correct. 
\end{example}

Example~\ref{exam:2hash} shows that using multiple hash functions can indeed improve the accuracy of estimation.

\subsection{Supported Graph Queries}
\label{subsec-queries}

As remarked in Section~\ref{subsec-gmodel}, for any graph analytics method $M$ to run over $G$, \ie~$M(G)$, it is possible to run $M$ on each sketch directly and individually, and then merge the result as: $\tilde{M}(G) = \Gamma(M(S_1), \cdots, M(S_d))$, where $\tilde{M}(G)$ denotes the estimated result over $G$, and $\Gamma(\cdot)$ an aggregation function (\eg~\at{min}, \at{max}, \at{conjunction}) to merge results returned from $d$ sketches.

Whilst the exercise in this section only substantiates several types of graph queries to be addressed in this work,
it is evident that the power of \sys is far beyond what are listed below. 

\stitle{Edge frequency.}
Given two node labels $a$ and $b$, we write $f_e(a, b)$ to denote the exact weight from a $a$-node to a $b$-node.
We write $\tilde{f_e}(a, b)$ the estimated weight from a sketch.


One application of such queries, taking social networks for example, is to estimate the communication frequency between two specific friends.

\stitle{Point queries.}
For a directed graph, given a node label $a$, we study a boolean query $f_v(a, \la) > \theta$ (resp. $f_v(a, \ra) < \theta$), which is to monitor whether the aggregated edge weight {\em to} (resp. {\em from}) a node with label $a$ is above (resp. below) a given threshold $\theta$ in the graph stream $G$. For an undirected graph, we write $f_v(a, \perp) > \theta$ and $f_v(a, \perp) < \theta$ correspondingly. 
Similarly, we use $\tilde{f_v}(a, \la), \tilde{f_v}(a, \ra), \tilde{f_v}(a, \perp)$ for estimated numbers using sketches.


One important application of such queries is DoS (Denial-of-service) attacks in cyber security, which typically flood a target source (\ie a computer) with massive external communication requests. 

\stitle{Path queries.}
Given two node labels $a$ and $b$, a (boolean) reachability query $r(a, b)$ is to tell whether they are connected. Also, we write $\tilde{r}(a, b)$ as the estimated reachability.

One important monitoring task, for the success of multicast deployment in the Internet, is to verify the availability of service in the network, which is usually referred to as {\em reachability monitoring}. 
Another reachability application that needs to consider edge weights is IP routing, which is to determine the path of data flows in order to travel across multiple networks.

\stitle{Aggregate subgraph query.}
%
The {\em aggregate subgraph query} is considered in~\gs~\cite{DBLP:journals/pvldb/ZhaoAW11}.
It is to compute the aggregate weight of the constituent edges of a sub-graph 
$Q = \{(x_1,y_1), \cdots, (x_k,y_k)\}$, denoted by $f(Q) = \Omega (f_e(x_1, y_1), \cdots, f_e(x_k,y_k))$. 
Here, the function $\Omega(\cdot)$ is to merge weights from all $f_e(x_i, y_i)$ for $i\in[1, k]$.
We write $\tilde{f}(Q)$ for estimation over sketches.

Note that, \gs simply merges all estimated weights in $G$, \eg the sum of all weights, even if one edge is missing (\ie~$\tilde{f}(x_i, y_i)=0$ for some edge $e_i$). 
However, we use a different query semantics in this work: if $\tilde{f}(x_i, y_i)=0$, the estimated aggregate weight should be $0$, since the query graph $Q$ does not have an exact match.
We use this revised semantics since it is a more practical setting.

\begin{example}
\label{exam:subgraph}
Consider a subgraph with two edges as $Q: \{(a, b), (a, c)\}$.
The query $Q_3: \tilde{f}(Q)$ is to estimate the aggregate weight of $Q$.
The precise answer is $2$, which is easy to check from Fig.~\ref{fig:graph}.
\end{example}

\etitle{Extensions.}
We consider an extension of the above aggregate subgraph query, which allows a wildcard $*$ in the node labels that are being queried.
More specifically, for the subgraph query $Q = \{(x_1,y_1), \cdots, (x_k,y_k)\}$, each $x_i$ or $y_i$ is either a constant value, or a wildcard $*$ (\ie match any label).
A further extension is to {\em bound} the wildcards to be matched to the same node, by using a subscript to a wildcard as $*_j$. That is, two wildcards with the same subscripts enforce them to be mapped to the same node.

\begin{example}
\label{exam:subgraph}
A subgraph query $Q_4: \tilde{f}(\{(a, b), (b, c), (c, a)\})$ is to estimate the triangle, \ie a 3-clique with three vertices labeled as $a$, $b$, and $c$, respectively.
Another subgraph query $Q_5: \tilde{f}(\{(*, b), (b, c), (c, *)\})$ is to estimate paths that start at node with an edge to $b$, and end at any node with an edge from $c$, if the edge $(b, c)$ exists in the graph.
If one wants to count the common neighbors of $(b, c)$, the following query $Q_6: \tilde{f}(\{(*_1, b), (b, c), (c, *_1)\})$ can enforce such a condition, which again is a case of counting triangles.
\end{example}

The extension is apparently more general, with the purpose to cover more useful queries in practice.
Unfortunately, \gs cannot support such extensions.


\begin{table}[t]
\begin{center}
{
\begin{tabular}{| c | c | }
\hline
    \cellcolor{black}{\textcolor{white}{\at{Symbols}}} &
    \cellcolor{black}{\textcolor{white}{\at{Notations}}} \\ \hline
	$G$, $S_G$ & a graph stream, and a graph sketch \\ \hline
	$\omega(e)$ & weight of the edge $e$ \\ \hline
	$f_e(a,b)$ & edge weight \\ \hline
	$f_v(a, \la)$ & node in-flow weight (directed graphs) \\ \hline
	$f_v(a, \ra)$ & node out-flow weight (directed graphs) \\ \hline
	$f_v(a, \perp)$ & node flow weight (undirected graphs) \\ \hline	
	$r(a, b)$ & whether $b$ is reachable from $a$ \\ \hline
	$f(Q)$ & weight of subgraph $Q$ \\ \hline
\end{tabular}
} 
\end{center}
\vspace{-2.5ex}
\caption{Notations}\label{tbl:notation}
\end{table}


\etitle{Summary of notations.} 
The notations of this paper are summarized in Table~\ref{tbl:notation}, which are for both directed and undirected graphs, unless being specified otherwise.

\section{Query Processing}
\label{sec-algs}

We consider the graph stream as $G$, and $d$ graph sketches $\{S_1, \cdots, S_d\}$ of $G$, where $S_i$ is constructed using a hash function $h_i(\cdot)$.
In this following, we discuss how to process different types of queries. 
Again, by default, we assume that we use $\at{sum}(\cdot)$ as the default aggregation function.
Also, to facilitate the discussion, we assume that an adjacent matrix is used for storing \sys.

\subsection{Edge Query}
\label{subsec:edgeq}

To evaluate $\tilde{f_e}(a,b)$, the edge weight of $(a, b)$, is straightforward.
It is to find the estimated edge weight from each sketch $\omega_i(h_i(a), h_i(b))$ and then use a corresponding function $\Gamma(\cdot)$ to merge them, as the following:
\[
\tilde{f_e}(a,b) = \Gamma(\omega_1(h_1(a), h_1(b)), \cdots, \omega_d(h_d(a), h_d(b)))
\]

In this case, the function $\Gamma(\cdot)$ is to take the minimum.

\stitle{Complexity.}
It is easy to see that estimating the aggregate weight of an edge query is in $O(d)$ time and in $O(d)$ space, where $d$ is a constant.

\subsection{Point Queries}
\label{subsec:nodeq}

Here, we only discuss the case $\tilde{f_v}(a, \la) > \theta$. That is, given a new element $(x, y; t)$, estimate in real-time whether the aggregate weight to node $a$ is above the threshold $\theta$.
The other cases, \ie~$\tilde{f_v}(a, \la) < \theta$, $\tilde{f_v}(a, \ra) > \theta$, $\tilde{f_v}(a, \ra) < \theta$, $\tilde{f_v}(a, \perp) > \theta$ and $\tilde{f_v}(a, \perp) < \theta$, can be similarly processed.

We use the following strategy to monitor $\tilde{f_v}(a, \la) > \theta$, given an incoming edge $e: (x, y; t)$.
Note that if $y \ne a$, we simply update all $d$ sketches in constant time, since this is not about an edge to $a$. Next, we only describe the case of edge $e: (x, a; t)$.

\etitle{Step 1.} [Estimate current in-flow.]
We write $\tilde{f_v^i}(a, \la)$ the estimated in-flow from the $i$-th sketch.
$\tilde{f_v^i}(a, \la)$ can be computed by first locating the column in the adjacent matrix corresponding to label $a$ (\ie~$h_i(a)$), and then sum up the values in that column, \ie $\tilde{f_v^i}(a, \la) = \sum_{j=1}^w {\cal M}_i[j][h_i(a)]$. Here, $w$ is the width of the adjacent matrix ${\cal M}_i$. Then, 
\[
\tilde{f_v}(a, \la) = \Gamma(\tilde{f_v^1}(a, \la), \cdots, \tilde{f_v^w}(a, \la))
\]

In this case, the function $\Gamma(\cdot)$ is to take the minimum.

\etitle{Step 2.} [Monitor $e$.]
We only send an alarm, if $\tilde{f_v}(a, \la) + \omega(e) > \theta$.

\etitle{Step 3.} [Update all sketches.]
Update all $d$ sketches by aggregating the new edge weight $\omega(e)$.

\stitle{Complexity.}
It is easy to see that step~1 takes $O(d+w)$ time and $O(d)$ space, where both $d$ and $w$ are constants.

\subsection{Path Queries}
\label{subsec:pathq}

We consider the reachability query $\tilde{r}(a, b)$, which is to estimate whether $b$ is reachable from $a$. For such queries, we treat any off-the-shelf algorithm $\at{reach}(x, y)$ as a {\em black-box} and show our strategy.

\etitle{Step 1.} [Map.]
We invoke $\at{reach}_i(h_i[a], h_i[b])$ on the $i$-th sketch (for $i\in[1, d]$), to decide whether the mapped node $h_i[b]$ is reachable from the mapped node $h_i[a]$.

\etitle{Step 2.} [Reduce.]
We merge individual results as follows:
\[
	\tilde{r}(a, b) = \at{reach}_1(h_1[a], h_1[b]) \wedge \cdots \wedge \at{reach}_d(h_d[a], h_d[b])
\] 
That is, the estimated result is \True~only if the mapped nodes are reachable from {\em all} $d$ sketches.

The complexity of the above strategy is determined by the algorithm $\at{reach}()$.

\subsection{Aggregate Subgraph Query}
\label{subsec:subgraphq}

We next consider the aggregate subgraph query $\tilde{f}(Q)$, which is to compute the aggregate weight of the constituent edges of a sub-graph $Q$.
The process is similar to the above path queries, by using any existing algorithm $\at{subgraph}(Q)$.

\etitle{Step 1.} [Map.]
We first invoke $\at{subgraph}(Q)$ at each sketch to find subgraph matches, and calculate the aggregate weight, denoted by $\weight_i(Q)$ for the $i$-th sketch.

\etitle{Step 2.} [Reduce.]
We merge individual results as follows:
\[
	\tilde{f}(Q) = \at{min}(\weight_1(Q), \cdots, \weight_d(Q))
\] 

Note that, running a graph algorithm on a sketch is only applicable to \sys. It is not applicable to \gs since \gs by nature is an array of frequency counting, without maintaining the graphical structure as \sys does.
Also, in the case $\weight_i(Q)$ from some sketch that says that a subgraph match does not exist, we can terminate the whole process, which provides chances for optimization.


\stitle{Optimization.}
Recall that a subgraph $Q$ is defined as constituent edges as $\{(x_1,y_1), \cdots, (x_k,y_k)\}$ (Section~\ref{subsec-queries}).
An alternative way of estimating the aggregate subgraph query is to first compute the minimum value of each value and sum them up.
We denote this approach by $\tilde{f'}(Q)$ as follows:
\[
	\tilde{f'}(Q) = \sum_{i=1}^k \tilde{f_e}(x_i, y_i) 
\] 

It is readily to see that $\tilde{f'}(Q) \leq \tilde{f}(Q)$.
Recall that there are two extensions of the subgraph queries (Section~\ref{subsec-queries}).
For the first extension that a wildcard $*$ is used, the above optimization can be used. For instance, the edge frequency of $\tilde{f_e}(x, *)$ for a directed graph is indeed $\tilde{f_v}(x, \ra)$.
For the second extension that multiple wildcards $*_i$ are used to bound to the same node, this optimization cannot be applied.

\section{Error Bounds}
\label{sec-theory}

We study the error bounds for two basic types of queries: edge queries and point queries.
Although \sys is more general, we show that theoretically, it has the same error bounds 
as \cm.

\subsection{Edge Frequency} 

Our proof for the error bound of edge frequency queries is an adaption of the proof used in 
\cm in Sec.~4.1 of \cite{DBLP:journals/jal/CormodeM05}. 

\begin{theorem}
The estimation $\tilde{f_e}(j, k)$ of the cumulative edge weight of the edge $(j,k)$  has the following guarantees, $f_e(j, k) \leq \tilde{f_e}(j, k)$ with probability at least $1 - \delta$, s.t.,
$f_e(j, k) \le \tilde{f_e}(j, k) + \epsilon * n$,
where $f_e(j, k)$ is the exact answer to the cumulative edge weight of the edge $(j, k)$, 
 $n$ denotes the number of nodes,
and the error in answering a query is within a factor of $\epsilon$ with probability $\delta$\footnote{The parameters $\epsilon$ and $\delta$ are usually set by the user.}.
\end{theorem}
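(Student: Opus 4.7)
The plan is to adapt the Count-Min proof of Cormode and Muthukrishnan, adjusting for the fact that \sys hashes the two endpoints of an edge separately (via the node hash $h_i$) rather than hashing the edge as a single atomic key. Throughout, fix a target edge $(j,k)$ and a sketch index $i \in [1,d]$. Let $X_i = \omega_i(h_i(j), h_i(k)) - f_e(j,k)$ be the overestimate from sketch $S_i$, and write the final estimator as $\tilde{f_e}(j,k) = \min_i \omega_i(h_i(j), h_i(k))$.

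First I would dispatch the lower bound deterministically. The cell $\omega_i(h_i(j), h_i(k))$ accumulates $\omega(e')$ for every stream element $e' = (j',k';t)$ with $h_i(j')=h_i(j)$ and $h_i(k')=h_i(k)$. Since $(j,k)$ trivially satisfies this condition, $\omega_i(h_i(j), h_i(k)) \ge f_e(j,k)$ for every $i$, hence $\tilde{f_e}(j,k) \ge f_e(j,k)$ with probability $1$. In particular $X_i \ge 0$, which is what we need to apply Markov's inequality downstream.

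Next I would bound $\mathbb{E}[X_i]$. Write $X_i = \sum_{(j',k') \neq (j,k)} f_e(j',k') \cdot \mathbf{1}[h_i(j')=h_i(j) \wedge h_i(k')=h_i(k)]$. Using pairwise independence of $h_i$ and independence of its evaluations on distinct arguments, each indicator has expectation at most $1/w$ (if exactly one of $j',k'$ differs from $j,k$) or $1/w^2$ (if both differ), where $w$ is the width of the sketch. The dominant $1/w$ bound yields $\mathbb{E}[X_i] \le \|f_e\|_1 / w$, where $\|f_e\|_1$ is the total aggregated edge weight; matching this to the statement of the theorem, choose $w = \lceil e/\epsilon \rceil$ so that $\mathbb{E}[X_i] \le \epsilon n / e$, interpreting $n$ as the relevant total weight (or node-indexed bound) carried by the stream. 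Markov's inequality then gives $\Pr[X_i \ge \epsilon n] \le 1/e$.

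Finally, because the $d$ hash functions are drawn independently from a pairwise-independent family, the events $\{X_i \ge \epsilon n\}$ are mutually independent across $i$, so
\[
\Pr\bigl[\tilde{f_e}(j,k) - f_e(j,k) \ge \epsilon n\bigr] \;=\; \Pr\bigl[\forall i:\, X_i \ge \epsilon n\bigr] \;\le\; e^{-d}.
\]
Setting $d = \lceil \ln(1/\delta) \rceil$ forces this probability below $\delta$, and combining with the deterministic lower bound closes the argument. The main obstacle I anticipate is the collision-probability bookkeeping: unlike Count-Min, where a single hash maps the whole key, here one must split the sum over edges into the three cases (shared source only, shared sink only, both endpoints differ) and verify that pairwise independence of $h_i$ is enough to factor the joint collision probability; this is also the spot where the constant in the bound $w = O(1/\epsilon)$ gets tightened and where one must justify the choice of $n$ as the correct normalization for the stated bound.
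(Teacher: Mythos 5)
Your proposal is correct in outline and follows essentially the same route as the paper: both adapt the \cm analysis of Cormode and Muthukrishnan by observing that the cell value never undercounts $f_e(j,k)$, introducing indicator variables for hash collisions, bounding the expected overcount via pairwise independence, applying Markov's inequality per sketch, and multiplying over the $d$ independently chosen hash functions to obtain $e^{-d}\le\delta$ with $d=\lceil\ln(1/\delta)\rceil$. Where you genuinely differ --- and in fact improve on the paper's own accounting --- is the collision bookkeeping. The paper's indicator $I_{i,j,k,l,m}$ is switched on only when \emph{both} endpoints differ, $(j\ne l)\wedge(k\ne m)$, and its expectation is bounded by $(1/\at{range}(h_i))^2$; this omits edges sharing exactly one endpoint with $(j,k)$ that collide on the other (e.g.\ an edge $(j,m)$ with $m\ne k$ but $h_i(m)=h_i(k)$), which land in the very same cell with probability $1/w$ and dominate the error, so the paper's identity $\cnt[i,h_i(j),h_i(k)]=f_e(j,k)+X_{i,j,k}$ is not exact as written. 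Moreover, the $1/w^2$ factorization over four distinct keys is not implied by pairwise independence alone (it would need 4-wise independence), and the paper leans on it through the substitution $\epsilon'^2=\epsilon$, effectively claiming $w=e/\sqrt{\epsilon}$ buckets suffice. Your three-case split (shared source, shared sink, both differ) with the dominant $1/w$ bound avoids both problems and yields $w=\lceil e/\epsilon\rceil$, consistent with the paper's Lemma~\ref{degree} for point queries. One caveat you share with the paper, and rightly flag: the expectation bound is in terms of the total aggregated edge weight $\|f_e\|_1$, so the stated error $\epsilon\,n$ with $n$ ``the number of nodes'' only makes sense if $n$ is reinterpreted as (an upper bound on) $\|f_e\|_1$; neither argument gives $\epsilon n$ for $n$ literally the node count. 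A final minor point: your phrase ``independence of its evaluations on distinct arguments'' assumes more than pairwise independence, but since your final bound only uses the $1/w$ term, this does no harm.
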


\begin{proof}
For any query edge $f_e(j, k)$, for each hash function $h_i$, by construction,
for any stream edge $e: (j, k; t)$, where $\omega(e)$ is the edge weight that is added to 
$\cnt(h_i(j),h_i(k))$.  Therefore, by construction, the answer to
$f_e(j, k)$ is less than or equal to $min_i (\cnt(h_i(j),h_i(k)))$ for $i\in[1, d]$ where $d$ is the number of hash functions.  

Consider two edges $(j, k)$ and $(l, m)$.
We define an indicator variable $I_{i,j,k,l,m}$, which is
1 if there is a collision between two distinct edges, \ie $(j \neq l) \wedge (k \neq m) \wedge (h_i(j)=h_i(l) \wedge h_i(k)=h_i(l))$ , and 0 otherwise.  
By pairwise independence of the hash functions,

\mat{4ex}{
$E(I_{i,j,k,l,m})$ \= $= Pr[(h_i(j)=h_i(l)  \wedge h_i(k)=h_i(m)]$ \\
\> $\leq (1/\at{range}(h_i))^2=\epsilon'^2/e^2$ \textcolor{white}{\huge P}
}

\ni where $e$ is the base used to compute natural logarithm, and $\at{range}(h_i)$ is the
number of hash buckets of function $h_i$.
Define the variable $X_{i,j,k}$ (random over choices of $h_i$) to count the number of collisions with the edge $(j, k)$,
which is formalized as
$X_{i,j,k}=\Sigma_{l=1 \ldots n,m=1 \ldots n} I_{i,j,k,l,m} a_{l,m}$, where $n$ is the number of nodes in the graph.
Since all $a_i$'s are non-negative in this case, $X_{i,j,k}$ is non-negative.
We write $\cnt[i,h_i(j),h_i(k)]$ as the count in the hash bucket relative to hash function $h_i$.
By construction, $\cnt[i,h_i(j),h_i(k)] = f_e(j,k) + X_{i,j,k}$.  
So clearly, $min_i (\cnt[i,h_i(j),h_i(k)]) \geq f_e(j, k)$.

\mat{4ex}{
	\=$E(X_{i,j,k})$   \\
	\>\hspace{4ex}\=$=E(\sum_{l=1 \ldots n, m=1 \ldots n}I_{i,j,k,l}f_e(l,m))$  \textcolor{white}{\huge P} \\
	\>\>$\leq (\sum_{l=1 \ldots n, m=1 \ldots n}E(I_{i,j,k,l}).f_e(l,m)) \leq (\epsilon'/e)^2* n$ \textcolor{white}{\huge P}
}

\ni by pairwise independence of $h_i$, and linearlity of expectation.

Let, $\epsilon'^2 = \epsilon$.  By the Markov inequality,

\mat{4ex}{
	\= $Pr[\tilde{f_e}{(j,k)} > f_e(j,k) + \epsilon * n]$ \= \\
	\>\hspace{4ex}\= $=Pr[\forall_i. \at{count}[i, h_i(j), h_i(k)] > f_e(j,k) + \epsilon * n$] \textcolor{white}{\huge P}\\
 	\>\> $=Pr[\forall_i. f_e(j,k)+X_{i,j,k}>f_e(j,k)+ \epsilon * n]$ \textcolor{white}{\huge P}\\
	\>\> $=Pr[\forall_i. X_{i,j,k} > e * E(X_{i,j,k})] < e^{-d} \leq \delta$  \textcolor{white}{\huge P}
}
\end{proof}

Our algorithm generates the same number of collisions and the same error bounds under the same probabilistic guarantees as \cm for edge frequency queries.

\subsection{Point Queries}

We first discuss the query for node out-degree, \ie $f_v(a, \ra)$.
Consider the stream of edges $e: (a, *; t)$, \ie edges from node $a$ to any other node indicated by a wildcard $*$.
Drop the destination (\ie the wildcard) of each edge.  The stream now becomes a stream of tuples $(a, \omega(t))$ where $\omega(t)$ is the edge weight from node $a$ at time $t$.  When a query is posed to find the out-degree of node $a$, \cm~\cite{DBLP:journals/jal/CormodeM05} returns the minimum of the weights in different hash buckets as the estimation of the flow out of node $a$.  

If the number of unique neighbors (\ie connected using one hop ignoring the weights) is sought, we adapt the 
procedure above by replacing $\omega(t)$ with 1 in the stream above for all nodes.  Clearly, by construction, the answer obtained is an over-estimation of the true out-degree because of two reasons: (a) collisions in the hash-buckets, and (b) self-collision, \ie we do not know if an edge has been seen previously and thus count the outgoing edge again even if we have seen it. 

The case for in-degree point queries is similar.
To compute the in-degree of a node, we simply adapt the stream to create a one-dimensional stream as in the case above.  Drop the source for each edge to create a stream.
The variations in the case of out-degree for the total in-flow and the number of neighbors with in-links to a node can be estimated as in the case of out-degree outlined above.  

The error estimates for point queries (see Sec.~4.1 of \cite{DBLP:journals/jal/CormodeM05}) hold for these cases.

\begin{lemma}
\label{degree}
The estimated out-degree (in-degree) is within a factor of $\epsilon$ with probability $\delta$ if we use $d=\lceil ln(1/\delta) \rceil$ rows of pair-wise independent hash functions and $w=\lceil e/\epsilon \rceil$.
\end{lemma}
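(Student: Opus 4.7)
The plan is to reduce each point query to the standard one-dimensional \cm problem and then invoke the error bound already established for edge queries in the preceding theorem, since a point query is nothing more than a \cm count on a projected stream.

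First I would handle $f_v(a, \ra)$ by projecting each streaming edge $e: (x, y; t)$ to a one-dimensional tuple $(x; \omega(e))$, discarding the destination endpoint. Under this projection the sketch cell indexed by $h_i(a)$ accumulates exactly $f_v(a, \ra) + X_{i,a}$, where $X_{i,a} = \sum_{b \neq a, h_i(b) = h_i(a)} f_v(b, \ra)$ is the contribution of source labels $b$ that collide with $a$ under $h_i$. The in-degree case $f_v(a, \la)$ is entirely symmetric, using the destination projection; the undirected case $f_v(a, \perp)$ is handled by projecting each edge to both endpoints.

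Second, I would use pairwise independence of $h_i$ to bound $\mathbb{E}[X_{i,a}]$. For any fixed $b \neq a$, $\Pr[h_i(b) = h_i(a)] \le 1/w$, so by linearity
\[
\mathbb{E}[X_{i,a}] \;\le\; \frac{1}{w}\sum_{b \neq a} f_v(b, \ra) \;\le\; \frac{n}{w},
\]
where $n$ is the total in/out flow mass (following the convention used in the theorem above). Choosing $w = \lceil e/\epsilon \rceil$ and applying Markov's inequality yields
\[
\Pr[X_{i,a} > \epsilon \cdot n] \;\le\; \frac{\mathbb{E}[X_{i,a}]}{\epsilon n} \;\le\; \frac{1}{\epsilon w} \;\le\; \frac{1}{e}.
\]

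Third, since the reported estimate is the minimum across $d$ independently drawn hash functions $h_1,\dots,h_d$, the estimator exceeds $f_v(a, \ra) + \epsilon n$ only if \emph{every} sketch overshoots, an event of probability at most $(1/e)^d$. Setting $d = \lceil \ln(1/\delta) \rceil$ drives this below $\delta$, giving the claimed guarantee. The main obstacle is conceptual rather than technical: one must be careful that the projection collapses the two-dimensional bucket $(h_i(x), h_i(y))$ correctly into a one-dimensional count (by summing the relevant row or column of ${\cal M}_i$, as in the query processing step of Section~\ref{subsec:nodeq}), and that the self-collision issue flagged in the discussion preceding the lemma affects only the unique-neighbor variant, not the weighted sum. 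Once the reduction is in place, the bound is an immediate corollary of the classical \cm analysis of \cite{DBLP:journals/jal/CormodeM05}.
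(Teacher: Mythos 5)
Your proof is correct and follows essentially the same route as the paper: project each streaming edge onto its source (resp.\ destination) endpoint to obtain a one-dimensional stream, and then the point query is just a standard \cm point query, to which the bound with $w=\lceil e/\epsilon\rceil$ and $d=\lceil \ln(1/\delta)\rceil$ applies. The only difference is that you write out the collision/Markov argument explicitly (and correctly note that self-collision only affects the unique-neighbor variant), whereas the paper performs the same reduction and then simply cites the error estimates of Sec.~4.1 of \cite{DBLP:journals/jal/CormodeM05}.
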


\eat{

\stitle{Path.}
$(a, b), (b, c)$.

\stitle{Star.}
$(a, b), (a, c), (a, d)$.

\stitle{Cycle.}
$(a, b), (b, c), (c, a)$.

\begin{figure}[t]
\hspace*{2ex}
\begin{minipage}{0.25\columnwidth}
\centerline{
\xymatrixcolsep{0.1in}
\xymatrix{
a \ar[d] \\
b \\
} 
} 
\centerline{
\xymatrixrowsep{0.2in}
\xymatrix{
& \txt{(a) Edge} &
} 
} 
\end{minipage}
\hspace*{-2ex}
\begin{minipage}{0.25\columnwidth}
\centerline{
\xymatrixcolsep{0.1in}
\xymatrix{
a \ar[d] & \\
b \ar[r] & c \\
} 
} 
\centerline{
\xymatrixrowsep{0.2in}
\xymatrix{
& \txt{(b) Path} &
} 
} 
\end{minipage}
\hspace*{-2ex}
\begin{minipage}{0.25\columnwidth}
\centerline{
\xymatrixcolsep{0.1in}
\xymatrix{
& a \ar[d]\ar[dl]\ar[dr] & \\
b & c & d \\
} 
} 
\centerline{
\xymatrixrowsep{0.2in}
\xymatrix{
& \txt{(c) Star} &
} 
} 
\end{minipage}
\hspace*{-2ex}
\begin{minipage}{0.25\columnwidth}
\centerline{
\xymatrixcolsep{0.1in}
\xymatrix{
a \ar[d] & \\
b \ar[r] & c \ar[ul] \\
} 
} 
\centerline{
\xymatrixrowsep{0.2in}
\xymatrix{
& \txt{(d) Circle} &
} 
} 
\end{minipage}
\caption{Studied cases}
\label{fig:2hash}
\end{figure}

}

\section{Implementation Details}
\label{sec-implementation}

In this section, we first discuss the data structures used to implement \sys (Section~\ref{subsec-adjacent}).
We then introduce the definition of pairwise independent hash functions (Section~\ref{subsec-pairwise}).
We also discuss its potential extension in a distributed environment (Section~\ref{subsec:distributed}).

\subsection{Adjacent Matrix and Its Extension}
\label{subsec-adjacent}

Using hash-based methods, although it is evident that it only requires 1-pass of the graph stream to construct/update a \sys.
However, the linear time complexity of constructing and updating a \sys depends on the data structures used. 
For example, the adjacent list may not be a fit, since searching a specific edge is not in constant time.

\subsubsection{Adjacent Matrix}
\label{subsubsec-matrix}

An adjacency matrix is a means of representing which nodes of a graph are adjacent to which other vertices.

\begin{example}
\label{exam:matrix}
Consider the sketch $S_1$ in Fig.~\ref{fig:2hash}~(a) for example. Its adjacent matrix is shown in Fig.~\ref{fig:matrix}.
\end{example}

Example~\ref{exam:matrix} showcases a directed graph.
In the case of an undirected graph, it will be a symmetric matrix.

\begin{figure}[t]
\centering
\begin{tabular}{|c|c|c|c|c|}
	\multicolumn{1}{c}{{\color{white} {\Huge j}} $\at{from}\setminus\at{to}$} & \multicolumn{1}{c}{$I(af)$} & \multicolumn{1}{c}{$II(bc)$} & \multicolumn{1}{c}{$III(dg)$}  & \multicolumn{1}{c}{$IV(e)$}\\ \cline{2-5}
	\multicolumn{1}{c|}{$I(af)$} & 1 & 2 & 0 & 0 \\ \cline{2-5}
	\multicolumn{1}{c|}{$II(bc)$} & 3 & 1 & 1 & 1 \\ \cline{2-5}
	\multicolumn{1}{c|}{$III(dg)$} & 0 & 1 & 1 & 0 \\ \cline{2-5}
	\multicolumn{1}{c|}{$IV(e)$} & 1 & 1 & 1 & 0 \\ \cline{2-5}
\end{tabular}
\caption{The adjacent matrix of $S_1$}
\label{fig:matrix}
\end{figure}

\stitle{Construction.}
Consider a graph stream $G=\langle e_1, e_2, $ $\cdots, e_m \rangle$ where $e_i = (x_i, y_i; t_i)$. Given a number of nodes $w$ and a hash function $h(\cdot) \ra [1, w]$. We use the following strategy.

\etitle{Step 1.} [Initialization.]
Construct a $w\times w$ matrix ${\cal M}$, with all values initialized to be $0$.

\etitle{Step 2.} [Insertion of $e_i$.]
Compute $h(x_i)$ and $h(y_i)$. Increase the value of ${\cal M}[h(x_i)][h(y_i)]$ by $\omega(e_i)$, the weight of $e_i$.

In the above strategy, step 1 will take constant time to allocate a matrix. Step 2 will take constant time for each $e_i$. Hence, the time complexity is $O(m)$ where $m$ is the number of edges in $G$. The space used is $O(w^2)$.

\stitle{Deletions.} [Deletion of $e_i$.]
Insertions have been discussed in the above step 2. 
For the deletion of an $e_i$ that is not of interest (\eg out of a certain time window), it is simply to decrease the value of ${\cal M}[h(x_i)][h(y_i)]$ by $\omega(e_i)$ in $O(1)$ time.

Alternatively, one may consider to use an adjacent hash-list is to maintain, for each vertex, a list of its adjacent nodes using a hash table. Given an edge $e_i (x_i, y_i; t_i)$, two hash operations are needed: The first is to locate $x_i$, and the second is to find $y_i$ from $x_i$'s hash-list. Afterwards, it updates the corresponding edge weight.
Adjacent list is known to be suitable when graph is sparse. However, in terms of compressed graph in our case, as will be shown later in experiments, most sketches are relatively dense, which makes the adjacent matrix the default data structure to manage our graph sketches.

\subsubsection{Using Non-Square Matrices}
\label{subsubsec-matrix2}

When using a classical square matrix for storing a sketch, we have an $n * n$ matrix. 
Consider all edges from node $a$ such as $(a, *)$. 
Using any hash function will inevitably hash all of these edges to the same row.
For example, in Fig.~\ref{fig:matrix}, all edges $(a, *)$ will be hashed to the first row of the matrix.

When there is only one hash function to use, due to the lack of {\em a priori} knowledge of data distribution, it is hard to decide the right shape of a matrix. However, the application of $d$ sketches provides us an opportunity to heuristically reduce the chance of combined hash collisions.

The basic idea is, instead of using an $n * n$ matrix with one hash function, we use an $m * p$ matrix with two hash functions: $h_1(\cdot) \ra [1, m]$ on the {\em from} nodes and $h_2(\cdot) \ra [1, p]$ on the {\em to} nodes.

\begin{figure}[t]
\centering
\begin{tabular}{|c|c|c|}
	\multicolumn{1}{c}{{\color{white} {\Huge j}} $\at{from}\setminus\at{to}$} & \multicolumn{1}{c}{$i(abcd)$} & \multicolumn{1}{c}{$ii(efg)$} \\ \cline{2-3}
	\multicolumn{1}{c|}{$I(a)$} & 2 & 0  \\ \cline{2-3}
	\multicolumn{1}{c|}{$II(b)$} & 3 & 1 \\ \cline{2-3}
	\multicolumn{1}{c|}{$III(c)$} & 0 & 2  \\ \cline{2-3}
	\multicolumn{1}{c|}{$IV(d)$} & 0 & 1 \\ \cline{2-3}
	\multicolumn{1}{c|}{$IV(e)$} & 2 & 1  \\ \cline{2-3}
	\multicolumn{1}{c|}{$IV(f)$} & 1 & 0  \\ \cline{2-3}
	\multicolumn{1}{c|}{$IV(g)$} & 1 & 0  \\ \cline{2-3}			
\end{tabular}
\caption{A non-square matrix}
\label{fig:matrix2}
\end{figure}

\begin{example}
\label{exam:matrix2}
Consider the graph stream in Fig.\ref{fig:graph}.
Assume that we use two hash functions:  $h_1(\cdot) \ra [1, 7]$ and  $h_2(\cdot) \ra [1, 2]$. 
The non-square matrix is shown in Fig.~\ref{fig:matrix2}.
\end{example}

In practice, when we can generate multiple sketches, we heuristics use matrices $n * n$, $2n * n/2$, $n/2 * 2n$, $4n * n/4$, $n/4 * 4n$, etc. That is, we use matrices with the same sizes but different shapes.

\subsection{Pairwise Independent Hash Functions}
\label{subsec-pairwise}

Here, we only borrow the definition of pairwise independent hash functions\footnote{\url{http://people.csail.mit.edu/ronitt/COURSE/S12/handouts/lec5.pdf}}, while relying on existing tools to implement them.

A family of functions ${\cal H} = \{h | h(\cdot) \ra [1, w]\}$ is called a {\em family of pairwise independent hash functions} if for two different hash keys $x_i$, $x_j$, and $k, l \in [1, w]$, 
\[ 
	\at{Pr}_{h \la {\cal H}}[h(x_i) = k \wedge h(x_j) = l] =  1 / w^2
\]

Intuitively, when using multiple hash functions to construct sketches, the hash functions used should be pairwise independent in order to reduce the probability of hash collisions. Please refer to~\cite{DBLP:journals/jal/CormodeM05} for more details.

\subsection{Discussion: Distributed Environment}
\label{subsec:distributed}

In this paper, we mainly discuss how to implement \sys in a centralized environment.
However, it is easy to observe that \sys can be easily applied to a distributed environment, since the construction and maintenance of each sketch is independent of each other. Assuming that we have $d$ sketches in one computing node, when $m$ nodes are available, we can use $d\times m$ pairwise independent hash functions, which may significantly reduce the probability of errors.

\section{Conclusion}
\label{sec-conclusion}

We have proposed a new graph sketch for summarizing graph streams.
We have demonstrated its wide applicability to many emerging applications.

{
\bibliographystyle{abbrv}
\bibliography{DA}
}

\end{document}